\theoremstyle{definition}
\newtheorem{theorem}{Theorem}[section]
\newtheorem{lemma}{Lemma}[section]
\newcommand{\ignore}[1]{}
\newcommand{\Rmnum}[1]{\expandafter\@slowromancap\romannumeral #1@}
\begin{document}
\title{{Objective Bayesian analysis for the generalized
exponential distribution}}
%\author{ Guangying Liu\thanks{Corresponding author. $\textit{Email address}$: liugying@nau.edu.cn}, Ziyan Zhuang, Min Wang \\
%\small{\emph{School of Statistics and Data Sciences, Nanjing Audit University, Nanjing 211815, China}
%}
%}

%\thanks{Research partially supported by National Natural Science Foundation of China (71971118, 11831008, 11971235); National social science foundation project of China (19BTJ035); Jiangsu Province College Science Key Foundation (21KJA110003); Natural Science Foundation of Jiangsu Province of China (BK20221348).}

\author{Aojun Li \ \ and \ \ Keying Ye \ \ and \ \ Min Wang\footnote{Corresponding author. min.wang3@utsa.edu (Min Wang).}\\
{\small{Department of Management Science and Statistics, The University of Texas, San Antonio, TX}}
}

\date{}
\maketitle
\vspace*{-1cm}

\begin{center}
{\small\bf Abstract}
\end{center}
\begin{center}
\begin{minipage}{145mm}
{\small  In this paper, we consider objective Bayesian inference of the generalized exponential distribution using the independence Jeffreys prior and validate the propriety of the posterior distribution under a family of structured priors. We propose an efficient sampling algorithm via the generalized ratio-of-uniforms method to draw samples for making posterior inference. We carry out simulation studies to assess the finite-sample performance of the proposed Bayesian approach. Finally, a real-data application is provided for illustrative purposes.
}

{{\bf Keywords}: Bayesian inference; Independence Jeffreys prior; posterior propriety; ratio-of-uniforms method}\\
%{{\bf JEL Classification}: C53; C45; C22}
\end{minipage}
\end{center}

%\newpage
\section{Introduction}

A random variable $X$ is said to follow a two-parameter generalized exponential (GE) distribution if its cumulative distribution function (CDF) is given by
\begin{align}
  F(x;\alpha, \lambda) &= \left(1-e^{-\lambda x}\right)^\alpha, \quad x >0, \label{eq:cdf}
\end{align}
where $\alpha > 0$ is a shape parameter and $\lambda > 0$ is a scale parameter. The corresponding probability density function (PDF) has the following form
\begin{align}
  f(x;\alpha, \lambda) &= \alpha\lambda \left(1-e^{-\lambda x}\right)^{\alpha-1}e^{-\lambda x}, \quad x >0. \label{eq:pdf}
\end{align}

The GE distribution was introduced by  \cite{gupta_theory_1999} as an alternative to the gamma and Weibull distributions for analyzing skewed data. Since then, statistical inference for the GE distribution has received much attention from Bayesian and  frequentist perspectives; see, for example, \citet{jaheen_empirical_2004}, \citet{raqab_bayesian_2005}, \citet{gupta_generalized_2007}, and \citet{kundu_generalized_2008}. Nadarajah (2011) provided a comprehensive review of the developments and appealing properties of the GE distribution.

Bayesian analysis of the GE distribution has been primarily conducted under weakly informative priors, such as the gamma distribution. However, in the absence of prior knowledge, it is natural to employ noninformative priors for the unknown parameters, as demonstrated by \citet{moala_bayesian_2012}, \citet{achcar_generalized_2015}, and \citet{naqash_bayesian_2016}. Special attention needs to be paid for the propriety of the posterior distribution under these priors, given that they are usually improper and may result in improper posterior distributions, making statistical inference invalid. For instance, \citet{moala_bayesian_2012} demonstrated that an improper maximal-data-information prior for the GE parameters resulted in an improper posterior distribution.

\ignore{
It is known that within a Bayesian framework, special attention has been paid to specifying appropriate priors for the unknown parameters. In the aforementioned literature, Bayesian analysis of the GE distribution was often conducted under informative priors (e.g., the gamma distribution) for the unknown parameters. However, in an absence of prior knowledge, it seems natural to employ noninformative priors (e.g., Jeffreys prior) for the unknown parameters; see, for example, \citet{moala_bayesian_2012}, \citet{achcar_generalized_2015}, \citet{naqash_bayesian_2016}. It deserves mentioning that noninformative priors are usually improper and could lead to the improper posterior distributions, and thus statistical inference from the corresponding posterior distribution may be invalid. For instance, \citet{moala_bayesian_2012} showed that the improper maximal data information prior for the GE parameters results in an improper posterior distribution.
}

Therefore, this paper revisits the commonly used noninformative prior structure to reflect prior ignorance, the independence Jeffreys prior, for the GE distributional parameters. We provide a theoretical investigation of the propriety of the posterior distribution of $(\alpha,\lambda)$ under the vague prior $\pi_{p}(\alpha,\lambda) = 1/(\alpha^a\lambda^b)$ for $a \in\mathbb{R}$ and $b \in\mathbb{R}$. We show that the resulting joint posterior distributions of the two parameters is proper for any $a \ge 1$ and $b \le 1$. We then propose a generalized ratio-of-uniforms method to draw samples for making posterior inference. Unlike traditional Markov chain Monte Carlo  methods, the proposed algorithm generates independent posterior samples and does not require any burn-in period, since no starting values are required.

\ignore{
This observation motivates us to revisit a commonly used noninformative prior structure to reflect prior ignorance, the independence Jeffreys prior for the GE distributional parameters. To be more specific, we provide a theoretical investigation on the propriety of the posterior distribution of the unknown parameters under the following vague prior for $(\alpha, \lambda)$ given by $\pi_{p}(\alpha,\lambda) = 1/(\alpha^a\lambda^b)$, where $a \in\mathbb{R}$ and $b \in\mathbb{R}$ are hyperparameters. It is shown that the posterior distributions of the two parameters under the family of prior are proper for any $a \ge 1$ and $b \le 1$. Since the joint posterior distribution of the parameters is not a commonly recoganized distribution, we adopt a generalized ratio-of-uniforms method to generate a posterior Monte Carlo sample for making inferences. Unlike other Markov chain Monte Carlo (MCMC) methods, the proposed sampling procedure not only generates independent posterior samples but also does not need any burn-in period, since no starting values are required for the proposed algorithm.
}

The remainder of this paper is organized as follows. In Section  \ref{Section:02}, we derive the independence Jeffreys prior and study the propriety of the posterior distributions under a family of vague priors. In Section \ref{Section:03}, we adopt the generalized ratio-of-uniforms method to draw samples for making posterior inference. In Section \ref{Section:04}, we compare methods for estimating the unknown model parameters through simulations. In Section \ref{sec:app}, we provide a real-data application for illustrative purposes. Finally, we conclude with remarks in Section \ref{sec:con}, and the proof of the main result is deferred to the Appendix.

\ignore{
The remainder of this paper is organized as follows. In Section \ref{Section:02}, we derive the independence Jeffreys prior and study the propriety of the posterior distributions under a family of vague priors. In Section \ref{Section:03}, we adopt the generalized ratio-of-uniforms method for posterior simulations. In Section \ref{Section:04} we carry out simulations to compare methods for estimating the unknown model parameters. In Section \ref{sec:app}, we provide a real-data application for illustrative purposes. Finally, concluding remarks are given in Section \ref{sec:con} with the proof of the main result deferred to the Appendix.
}

\section{Bayesian estimation} \label{Section:02}

In a Bayesian paradigm, one of the most commonly used noninformative priors is the independence Jeffreys prior \citep{jeffreys_theory_1961}, which can be constructed as the product of Jeffreys prior for each unknown parameter, while treating other parameters to be fixed. Recall from \citet{nadarajah_exponentiated_2011} that a simpler form of the Fisher information of the parameters $(\alpha, \lambda)$ can be written as
%$\pi_J(\alpha, \lambda) \propto \sqrt{\mathrm{det}\left(I(\alpha, \lambda)\right)},$
%\begin{align*}
 % \pi_J(\alpha, \lambda) &\propto \sqrt{det I(\alpha, \lambda)},
%\end{align*}
%where $I(\alpha, \lambda)$ is the Fisher information matrix. Its simpler forms were derived by  and can be expressed as
\begin{equation*}
  I(\alpha, \lambda) =
  \begin{bmatrix}
 \frac{n}{\alpha^2}  & \frac{n\{\psi(2)-\psi(\alpha + 1)\}}{\lambda(\alpha-1)}\\
  \frac{n\{\psi(2)-\psi(\alpha + 1)\}}{\lambda(\alpha-1)} & \frac{n}{\lambda^2} + \frac{n\alpha\{\pi^2-6\psi'(\alpha)-12C-12\psi(\alpha)+6C^2+12C\psi(\alpha)+6\psi^2(\alpha)\}}{6\lambda^2(\alpha-2)}
  \end{bmatrix},
\end{equation*}
for $\alpha >0$, $\alpha \neq 1, 2$, where $C$ is the Euler's constant, $\psi$ is denotes the digamma function. \\
%\begin{equation*}
%  I(\alpha, \lambda) =
%  \begin{Bmatrix}
%  E\left(-\frac{\partial^2 \log L}{\partial \alpha^2}\right) & E\left(-\frac{\partial^2 \log L}{\partial \alpha \partial \lambda}\right) \\
%  E\left(-\frac{\partial^2 \log L}{\partial \lambda \partial \alpha}\right) & E\left(-\frac{\partial^2 \log L}{\partial \lambda^2}\right)
%  \end{Bmatrix},
%\end{equation*}
%with
%\begin{alignat*}{1}
%  E\left(-\frac{\partial^2 \log L}{\partial \alpha^2}\right) &= \frac{n}{\alpha^2}, \\
%  E\left(-\frac{\partial^2 \log L}{\partial \alpha \partial \lambda}\right) &= \frac{n\{\psi(2)-\psi(\alpha + 1)\}}{\lambda(\alpha-1)}, \\
%  E\left(-\frac{\partial^2 \log L}{\partial \lambda^2}\right) &= \frac{n}{\lambda^2} + \frac{n\alpha\{\pi^2-6\psi'(\alpha)-12C-12\psi(\alpha)+6C^2+12C\psi(\alpha)+6\psi^2(\alpha)\}}{6\lambda^2(\alpha-2)}.
%
%\end{alignat*}
The independence Jeffreys prior for $(\alpha, \lambda)$ is given by
\begin{align}
  \pi_{IJ}(\alpha,\lambda) \propto \pi_{IJ}(\alpha)\pi_{IJ}(\lambda) \propto \frac{1}{\alpha\lambda}. \label{eq:ij}
\end{align}
In this paper, we consider  the propriety of the posterior distribution under a family of vague priors for $(\alpha,\lambda)$ given by
%By adding hyperparameter parameters $(a, b)$ to \eqref{eq:ij}, we proposed
\begin{align}
  \pi_{p}(\alpha,\lambda) \propto \frac{1}{\alpha^a\lambda^b}, \quad a, b \in\mathbb{R}, \label{eq:pp}
\end{align}
which can be applied in a more general setting.  The resulting joint posterior density function of $\alpha$ and $\lambda$ under the prior in (\ref{eq:pp}) is given by
\begin{equation} \label{joint:poster}
   \pi(\alpha,\lambda \mid \bm X) \propto \alpha^{n-a}\lambda^{n-b} \prod_{i=1}^{n}\left(1-e^{-\lambda x_i}\right)^{\alpha-1}e^{-\lambda \sum_{i=1}^{n}x_i}.
\end{equation}
In what follows, we show that the prior in (\ref{eq:pp}) results in a proper posterior distribution under mild conditions and defer its proof to the appendix.
\begin{theorem}
\label{theorem_proper}
For sample size $n>a-1$, the joint posterior distribution of ($\alpha, \lambda$) under the prior $\pi_{p}(\alpha,\lambda)= 1/(\alpha^a\lambda^b)$ is proper if $a \ge 1$ and $b \le 1$.
\end{theorem}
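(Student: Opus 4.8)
The goal is to show that the unnormalized density in (\ref{joint:poster}) has finite mass over $(0,\infty)^2$. Since that density is nonnegative, Tonelli's theorem lets me iterate the two integrals in whichever order is convenient, and the plan is to integrate out $\alpha$ first. Writing $S(\lambda):=-\sum_{i=1}^{n}\log(1-e^{-\lambda x_i})$, which is finite, strictly positive, and strictly decreasing in $\lambda>0$, one has $\prod_{i}(1-e^{-\lambda x_i})^{\alpha-1}=e^{-(\alpha-1)S(\lambda)}$, so that, precisely because $n-a+1>0$,
\begin{equation*}
\int_0^\infty\alpha^{\,n-a}e^{-(\alpha-1)S(\lambda)}\,d\alpha=\Gamma(n-a+1)\,\frac{e^{S(\lambda)}}{S(\lambda)^{\,n-a+1}},\qquad e^{S(\lambda)}=\prod_{i=1}^{n}\bigl(1-e^{-\lambda x_i}\bigr)^{-1}.
\end{equation*}
Hence the whole problem reduces to proving
\begin{equation*}
J:=\int_0^\infty \lambda^{\,n-b}\,e^{-\lambda\sum_{i=1}^{n}x_i}\;\frac{\prod_{i=1}^{n}(1-e^{-\lambda x_i})^{-1}}{S(\lambda)^{\,n-a+1}}\,d\lambda<\infty .
\end{equation*}

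The integrand $g(\lambda)$ is continuous and strictly positive on $(0,\infty)$, so I would split $J=\int_0^{\delta}+\int_{\delta}^{M}+\int_{M}^{\infty}$ for suitable $0<\delta<M<\infty$; the middle piece is finite by continuity, and the two tails I would bound using the elementary inequalities $\lambda x_i e^{-\lambda x_i}\le 1-e^{-\lambda x_i}\le\lambda x_i$ and $-\log(1-u)\ge u$. Near $\lambda=0$ the first of these gives $\prod_{i}(1-e^{-\lambda x_i})^{-1}\le e^{\lambda\sum_i x_i}/(\lambda^{n}\prod_i x_i)$, which cancels the factor $e^{-\lambda\sum_i x_i}$ together with $\lambda^{n}$ worth of the power $\lambda^{\,n-b}$, while $S(\lambda)\ge-\sum_i\log(\lambda x_i)\ge\tfrac{n}{2}\log(1/\lambda)$ for $\lambda$ small; this yields $g(\lambda)\lesssim \lambda^{-b}\bigl(\log(1/\lambda)\bigr)^{-(n-a+1)}$, which is integrable at the origin once $b\le 1$ (the logarithmic factor disposing of the borderline case $b=1$ via the substitution $u=\log(1/\lambda)$). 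This is the step where the hypothesis $b\le 1$ is used.

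The tail $\lambda\to\infty$ is the main obstacle. There the factor $\prod_i(1-e^{-\lambda x_i})^{-1}=e^{S(\lambda)}$ is harmless — it tends to $1$ and is $\le e^{S(1)}$ for $\lambda\ge1$ by monotonicity of $S$ — but $S(\lambda)\to 0$, so $S(\lambda)^{-(n-a+1)}$ blows up. To quantify this I would keep only the summand with $x_i=x_{(1)}:=\min_i x_i$ in $S(\lambda)$ and then apply $-\log(1-u)\ge u$, obtaining $S(\lambda)\ge e^{-\lambda x_{(1)}}$ and hence, since $n-a+1\ge 0$, $S(\lambda)^{-(n-a+1)}\le e^{(n-a+1)x_{(1)}\lambda}$. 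Therefore, for $\lambda\ge M$,
\begin{equation*}
g(\lambda)\ \lesssim\ \lambda^{\,n-b}\exp\!\Bigl\{-\Bigl(\textstyle\sum_{i=1}^{n}x_i-(n-a+1)x_{(1)}\Bigr)\lambda\Bigr\},
\end{equation*}
and the exponent is strictly negative, because $\sum_{i}x_i-(n-a+1)x_{(1)}=\sum_{i}\bigl(x_i-x_{(1)}\bigr)+(a-1)x_{(1)}>0$ thanks to $a\ge1$ and $x_{(1)}>0$. So the exponential decay beats the polynomial factor and $\int_M^{\infty}g<\infty$, which finishes the proof. The heart of the argument is exactly this last estimate: one must show that the geometric decay $e^{-\lambda\sum_i x_i}$ furnished by the likelihood outpaces the exponential growth of $S(\lambda)^{-(n-a+1)}$ as $\lambda\to\infty$, and it is precisely the condition $a\ge1$ that makes the net rate positive.
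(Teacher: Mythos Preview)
Your argument follows the same outline as the paper's: integrate out $\alpha$ to obtain the marginal of $\lambda$, then control the two tails separately. Near $\lambda\to 0$ your bound and the paper's are essentially the same, both arriving at $\lambda^{-b}\{\log(1/\lambda)\}^{-(n-a+1)}$. At $\lambda\to\infty$ you take a more elementary route: the paper rewrites the integrand so that a Gamma kernel $\lambda^{n-b}e^{-\lambda\delta}$ carries the integrability, and then uses L'Hospital's rule together with an auxiliary lemma (that $(\sum_j x_j-\delta)/(n-a+1)>x_i$ for some $i$) to show the remaining factor is bounded; you instead bound $S(\lambda)\ge e^{-\lambda x_{(1)}}$ directly and read off the net exponential rate $\sum_i x_i-(n-a+1)x_{(1)}$. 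Your version is shorter and avoids both the limit computation and the auxiliary lemma.

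One small point to tighten: the strict inequality $\sum_i(x_i-x_{(1)})+(a-1)x_{(1)}>0$ can fail when $a=1$ and all observations coincide, so in that borderline case you need at least two distinct $x_i$. The paper makes exactly the same tacit assumption (its auxiliary lemma explicitly requires ``not all $x_i$'s are equal''), so this is a caveat both proofs share rather than a gap peculiar to yours. Likewise, in the $b=1$ case the substitution $u=\log(1/\lambda)$ yields a convergent integral only when $n-a+1>1$, i.e.\ $n>a$; the paper's proof imposes the same slightly stronger condition at that step.
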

Note that the independence Jeffreys prior is one of the priors in Theorem \ref{theorem_proper} having proper posterior with $a=1$ and $b=1$.

\section{The generalized ratio-of-uniforms method} \label{Section:03}
We observe from the joint posterior distribution in (\ref{joint:poster}) that the conditional posterior density of $\alpha$ given $\lambda$ and the data is
\begin{align}
  \pi(\alpha \mid \lambda, \bm X) \sim \text{Gamma}\left(n-a+1,-\sum_{i=1}^{n}\ln\left(1-e^{-\lambda x_i}\right)\right), \label{eq:mpost_alpha}
\end{align}
and that the marginal posterior density function of $\lambda$ given the data is
\begin{align}
  \pi(\lambda \mid \bm X)
  &= \lambda^{n-b} e^{-\lambda \sum_{i=1}^{n}x_i} \prod_{i=1}^{n}\left(1-e^{-\lambda x_i}\right)^{-1} \frac{\Gamma(n-a+1)}{\left\{-\sum_{i=1}^{n}\ln\left(1-e^{-\lambda x_i}\right)\right\}^{n-a+1}}. \label{eq:mpost_lambda}
\end{align}

Note that sampling from \eqref{eq:mpost_alpha} can be easily performed in R software, whereas sampling from \eqref{eq:mpost_lambda} may be challenging as it is not tractable analytically. We adopt the ratio-of-uniforms method \citep{wakefield_efficient_1991} to generate independent posterior samples for making statistical inference as follows. We suppose that a pair of random variable $(u,v)$ is uniformly distributed over the region
\begin{align*}
  C(r)=\left[ (u,v): 0 < u \le \left\{\pi\left(\lambda \mid \bm X\right)\right\}^{\frac{1}{r+1}}\right],
\end{align*}
where $r \ge 0$ is a constant, $\pi\left(\lambda \mid \bm X\right)$ is given by \eqref{eq:mpost_lambda}, and $\lambda = v/{u^r}$ has probability density function $\pi\left(\lambda \mid \bm X\right)/\int\pi\left(\lambda \mid \bm X\right)\,d\lambda$. To simulate random samples within $C(r)$, we develop a bounding rectangle $\left[0, a(r)\right] \times \left[ b^-(r), b^+(r)\right]$, with the conditions that the following endpoints of the intervals are bounded, such that
\begin{align*}
  a(r)\!=\!\underset{\lambda>0}\sup\left[\{\pi\left(\lambda\mid \bm X\right)\}^{\frac{1}{r+1}}\right],\quad
  b^-(r)\!=\!\underset{\lambda>0}\inf\left[\lambda\{\pi\left(\lambda \mid \bm X\right)\}^{\frac{r}{r+1}}\right],\quad
  b^+(r)\!=\!\underset{\lambda>0}\sup\left[\lambda\{\pi\left(\lambda \mid \bm X\right)\}^{\frac{r}{r+1}}\right].
\end{align*}
When $\lambda\rightarrow\infty$, we can use L'Hospital's rule to show that $\pi\left(\lambda \mid \bm X\right)$ is bounded and that $\lambda\{\pi\left(\lambda \mid \bm X\right)\}^{\frac{r}{r+1}} = \{\lambda^{\frac{r+1}{r}} \pi\left(\lambda \mid \bm X\right)\}^{\frac{r}{r+1}}$ is bounded. When  $\lambda\rightarrow 0$, we note that $\pi\left(\lambda \mid \bm X\right)$ tends to infinity (e.g., $b=1$). To avoid this problem, we consider a transformation of variable, $Z = g(\lambda) = \ln{\lambda}$, such that $p\left(z \mid \bm X\right)= \pi\left\{g^{-1}(z) \mid \bm X\right\}\left|\frac{d}{dz}g^{-1}(z)\right| = \pi\left(e^z \mid \bm X\right)e^z$ is bounded at $\lambda\rightarrow 0$. Hence, when $\lambda\rightarrow 0$, we first simulate random samples $\{z^{(1)}, \dots, z^{(M)}\}$ from $p\left(z \mid \bm X\right)$ and then convert into samples from $\pi\left(\lambda \mid \bm X\right)$ by inverting the transformation (i.e., $\{\lambda^{(1)}, \dots, \lambda^{(M)}\} = \{g^{-1}(z^{(1)}), \dots, g^{-1}(z^{(M)})\}$).

%When $\lambda\rightarrow\infty$, the first two terms in \eqref{eq:mpost_lambda2} are bounded using L'Hospital's rule. The rest of the terms are bounded near infinity as shown in the proof. Hence, we observe that $\pi\left(\lambda \mid \bm X\right)$ is bounded at $\lambda\rightarrow\infty$. Furthermore, it can be shown that  $\lambda\{\pi\left(\lambda \mid \bm X\right)\}^{\frac{r}{r+1}} = \{\lambda^{\frac{r+1}{r}} \pi\left(\lambda \mid \bm X\right)\}^{\frac{r}{r+1}}$ is bounded at $\lambda\rightarrow\infty$, as the first two terms in \eqref{eq:mpost_lambda2} become $\lambda^{n-b+\frac{r+1}{r}} e^{-\lambda \delta}$, which is still bounded using L'Hospital's rule.
%
%However, when $\lambda\rightarrow 0$, $\pi\left(\lambda \mid \bm X\right)$ could approach infinity (e.g., $b=1$). To avoid this problem, we consider a transformation of variable, $Z = g(\lambda) = \ln{\lambda}$, such that $p\left(z \mid \bm X\right)= \pi\left\{g^{-1}(z) \mid \bm X\right\}\left|\frac{d}{dz}g^{-1}(z)\right| = \pi\left(e^z \mid \bm X\right)e^z$ is bounded at $\lambda\rightarrow 0$. The proof of the boundedness is deferred to the appendix. Hence, when $\lambda\rightarrow 0$, random samples $\{z^{(1)}, \dots, z^{(M)}\}$ could be simulated from $p\left(z \mid \bm X\right)$, then convert into samples from $\pi\left(\lambda \mid \bm X\right)$ by inverting the transformation (i.e., $\{\lambda^{(1)}, \dots, \lambda^{(M)}\} = \{g^{-1}(z^{(1)}), \dots, g^{-1}(z^{(M)})\}$).

As a result of the above, $b^-(r) = 0$, $a(r)$ and $b^+(r)$ are either bounded or bounded via a transformation. Thus, we summarize the generalized ratio-of-uniforms method for posterior sampling in Algorithm \ref{alg:rou}. %The rest of the proof follows in a similar way.\\
\RestyleAlgo{ruled}
\begin{algorithm}
\SetKwInOut{Input}{Input}
\SetKwInOut{Output}{Output}
\ResetInOut{Output}
\caption{The generalized ratio-of-uniforms method}\label{alg:rou}
\KwData{Read data;}
\Input{Set $r$ and the number of sample $M$ to be generated;}
\Output{A sequence of samples $\left(\alpha^{(1)}, \lambda^{(1)}\right), \cdots, \left(\alpha^{(M)}, \lambda^{(M)}\right)$.}
 Compute $a(r)=\underset{\lambda>0}\sup\left[\{\pi\left(\lambda \mid \bm X\right)\}^{\frac{1}{r+1}}\right]$ and $b^+(r)=\underset{\lambda>0}\sup\left[\lambda\{\pi\left(\lambda \mid \bm X\right)\}^{\frac{r}{r+1}}\right]$\;
 \For{$k \gets 1$  \KwTo $M$}{
  Simulate $U \sim \text{Uniform}(0, a(r))$, $V \sim \text{Uniform}(0, b^+(r))$, and compute $\rho = V/U^r$\;
  \While{$U>\{\pi\left(\rho \mid \bm X\right)\}^{1/(r+1)}$}{
   $U \sim \text{Uniform}(0, a(r))$\;
   $V \sim \text{Uniform}(0, b^+(r))$\;
   $\rho = V/U^r$\;
   }{
   $\lambda^{(k)} = \rho$\;
   Simulate $\alpha^{(k)} \sim \text{Gamma}\left(n-a+1,-\sum_{i=1}^{n}\ln\left(1-e^{-\lambda^{(k)} x_i}\right)\right)$.
  }
 }
\end{algorithm}

\section{Simulation studies} \label{Section:04}

We carry out simulations to compare the finite sample performance of Bayesian estimation and maximum likelihood estimation (MLE) in estimating the parameters of the GE distribution. We generate the GE random variables of size $n=(10,15,\cdots,100)$ with various combinations of $\alpha = (0.5, 1, 2)$ and $\lambda = (0.5, 1, 2)$, and for each combination of $(n, \alpha, \lambda)$, we run $N = 200$ replications. To reflect the absence of prior knowledge, we use the independence Jeffreys prior ($a=1$, $b=1$) for Bayesian inference and employ Algorithm 1 to draw $M=10,000$ samples for making the posterior inference. We obtain the MLEs of the parameters by maximizing the profile log-likelihood function; see, for example, \cite{gupta_generalized_2002}.

To compare the performance of the two estimators, we consider the scaled bias (SBias) and the scaled root mean squared error (SRMSE) given by
\begin{align*}
\mathrm{SBias}=\frac{E(\hat{\theta})-\theta}{\theta} \quad \text{and} \quad
\mathrm{SRMSE}=\frac{\sqrt{\text{MSE}(\hat{\theta})}}{\theta},
\end{align*}
where $\hat{\theta} = \sum_{n=1}^{N} \hat{\theta}^{(j)}/N$ and $\hat{\theta}^{(j)}$ is an estimate of $\theta$ for the $j$th replication, $j = 1, \cdots, N$. Numerical results with $\lambda = 1$ are depicted in Figures \ref{fig:Bayes_MLE_GE_alpha_lambda1_pmedian_plot} and \ref{fig:Bayes_MLE_GE_lambda_lambda1_pmedian_plot}. Several conclusions can be drawn as follows.
\begin{itemize}
    \item[(i)] We observe that, for small sample sizes (e.g., $n < 30$), the Bayesian approach has better SBias and SRMSE results (closer to zero) as compared to the MLE method.
    \item[(ii)] For large sample sizes (e.g., $n \ge 30$), the SBias and SRMSE for both approaches converge to the same value as $n$ increases.
    \item[(iii)] The Bayesian approach is recommended for making inferences on the parameters of the GE distribution, especially when sample sizes are small.
\end{itemize}

\begin{figure}[htbp!]
\centering
\caption{SBias and SRMSE for $\alpha$ using the Bayes (solid line) and MLE (dashed line) estimators when {$a = 1$, $b = 1$, $\alpha = (0.5, 1, 2)$, $\lambda = 1$}}
\label{fig:Bayes_MLE_GE_alpha_lambda1_pmedian_plot}
\includegraphics[width=1\linewidth, height = 10cm]{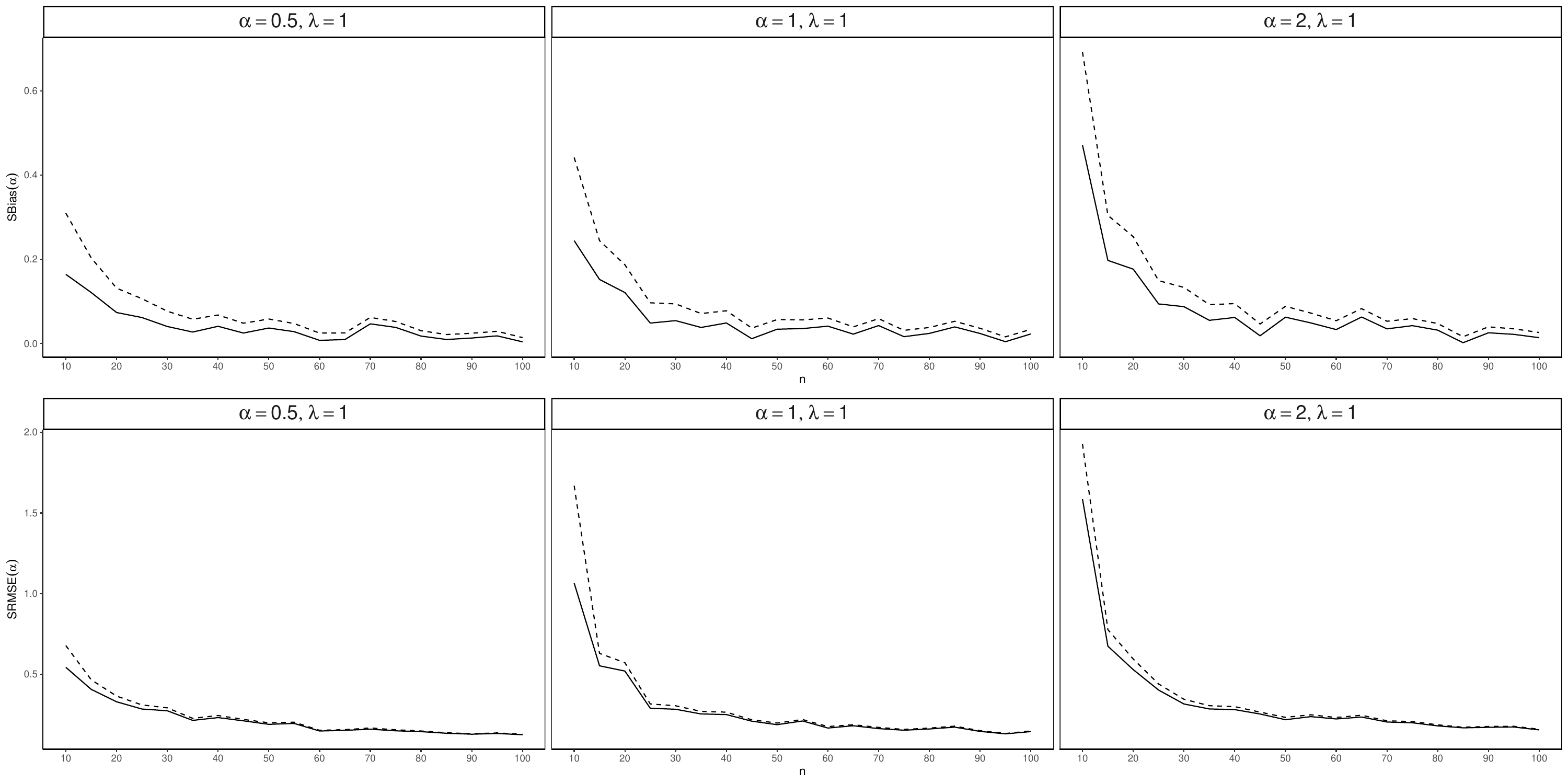}
\end{figure}

\begin{figure}[htbp!]
\centering
\caption{SBias and SRMSE for $\lambda$ using the Bayes (solid line) and MLE (dashed line) estimators when {$a = 1$, $b = 1$, $\alpha = (0.5, 1, 2)$, $\lambda = 1$}}
\label{fig:Bayes_MLE_GE_lambda_lambda1_pmedian_plot}
\includegraphics[width=1\linewidth, height = 10cm]{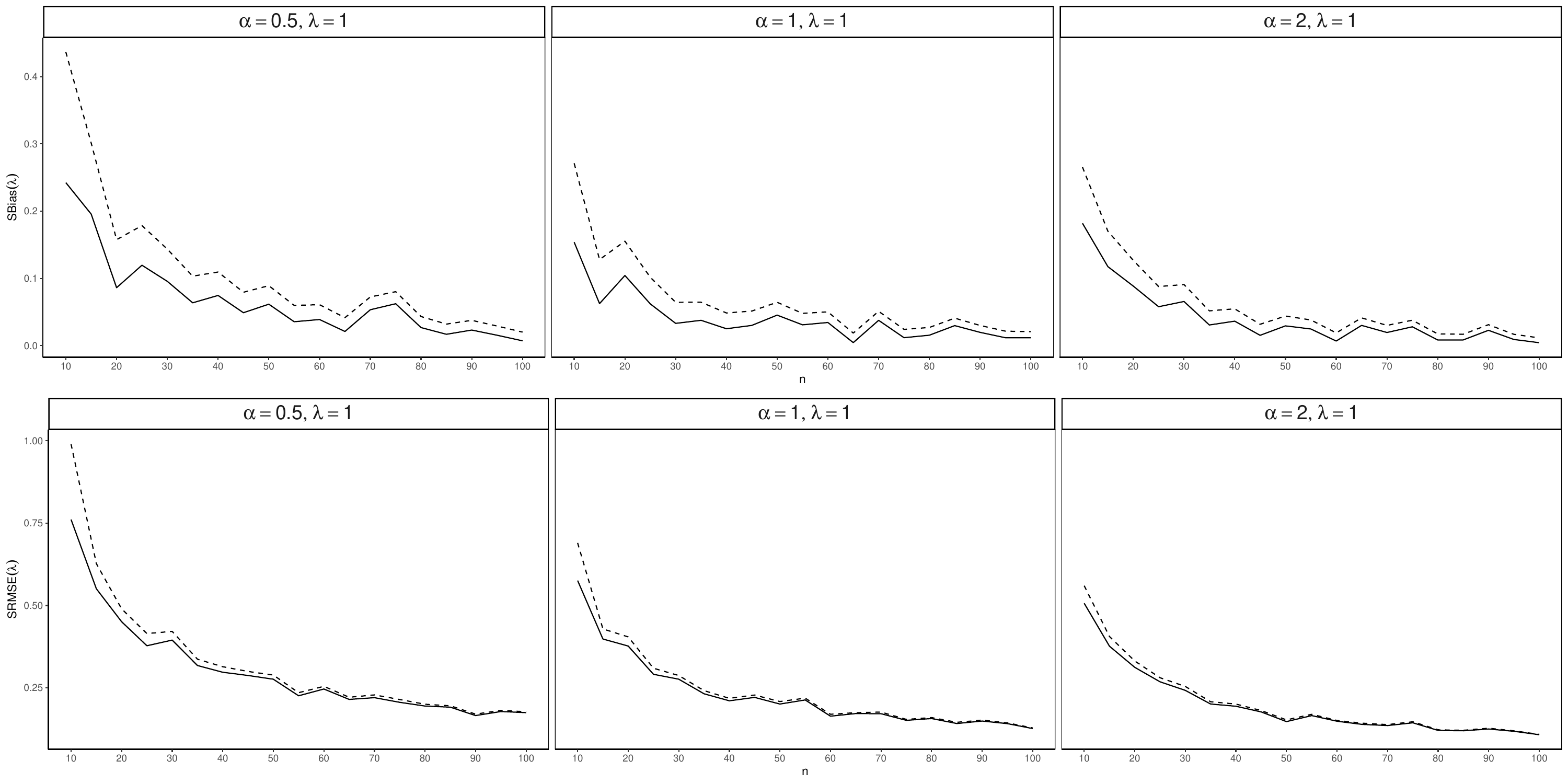}
\end{figure}

%\textcolor{red}{We assess the impact of the hyperparameters in (\ref{eq:pp}) with  $a = (1, 1.5, 2)$ and $b = (-1, 0.5, 1)$. Figure \ref{fig:Bayes_MLE_SBias_GE_pmedian_alpha1_lambda1} shows the SBias and SRMSE when $\alpha = 1$ and $\lambda = 1$. Based on the results, we conclude that Bayesian approach is getting better as compared to the MLE when $a$ and/or $b$ increases. Also, from Theorem \ref{theorem_proper}, the posterior distribution is proper when $a \ge 1$ and $b \le 1$. We observe that the $b = 1$ is the optimal value for prior distributions. Same conclusions were observed for $(\alpha = 0.5, \lambda = 1)$ and $(\alpha = 2, \lambda = 1)$. However, more work with mathematical proof is needed to corroborate this finding.}

%\begin{figure}[ht!]
%\centering
%\caption{SBias for Bayes approach (solid line) and MLE (dashed line) \linebreak %\centering{$a = (1, 1.5, 2)$, $b = (-1, 0.5, 1)$, $\alpha = 1$, $\lambda = 1$}}
%\label{fig:Bayes_MLE_SBias_GE_pmedian_alpha1_lambda1}
%\includegraphics[width=1\linewidth, height = 3.3in]{figures/Bayes_MLE_SBias_GE_pmedian_alpha1_lambda1.pdf}
%\end{figure}

% Also for small sample sizes, when $\lambda$ is fixed, the SBias and SRMSE of $\hat{\alpha}$ with both approaches become larger as $\alpha$ increases.
% When $\lambda =1$, the SBias and SRMSE increase with alpha for estimation of $\alpha$, decrease with alpha for estimation of $\lambda$.

\section{A real-data application}
\label{sec:app}
We consider a real data set from \citet{lawless_statistical_1982}  to illustrate the practical use of the proposed approach. The deep groove ball bearings dataset is a collection of endurance test results for 23 ball bearings, measuring the number of million revolutions before failure. The data is given in Table \ref{tab:Application_data}. It can be seen from the histogram in Figure \ref{fig:GE_Bayes_app_fit_zoom} that the distribution is right skewed.

\begin{table}[htbp!] \label{tab:Application_data}
\centering
\small
\caption{A real data set from \citet{lawless_statistical_1982} }
\begin{tabular}{rrrrrrrrrrrrr} \hline
  17.88 & 28.92 & 33.00 & 41.52 & 42.12 & 45.60 & 48.40  & 51.84  & 51.96  & 54.12  & 55.56  & 67.80 \\
  68.64 & 68.64 & 68.88 & 84.12 & 93.12 & 98.64 & 105.12 & 105.84 & 127.92 & 128.04 & 173.40 & \\
  \hline
\end{tabular}
\end{table}

The parameters are estimated using the independence Jeffreys prior $(a=1, b=1)$ and model diagnostics are included in Figures \ref{fig:GE_Bayes_app_trace_density} and \ref{fig:GE_Bayes_app_acf}. The trace plots and density plots of the marginal posterior samples suggest that the chain is mixing well and has reached convergence. This observation is supported by the z-scores, $-$0.8674 and $-$1.493 for $\alpha$ and $\lambda$, from the Geweke's convergence diagnostic (i.e., $|Z| < 1.96$). The auto-correlation function (ACF) plots show no correlation between the posterior samples for $\alpha$ and $\lambda$.

\begin{figure}[htbp!]
\centering
\caption{Trace and density plots of the marginal posterior samples}
\label{fig:GE_Bayes_app_trace_density}
\includegraphics[height=10cm, width=1\linewidth]{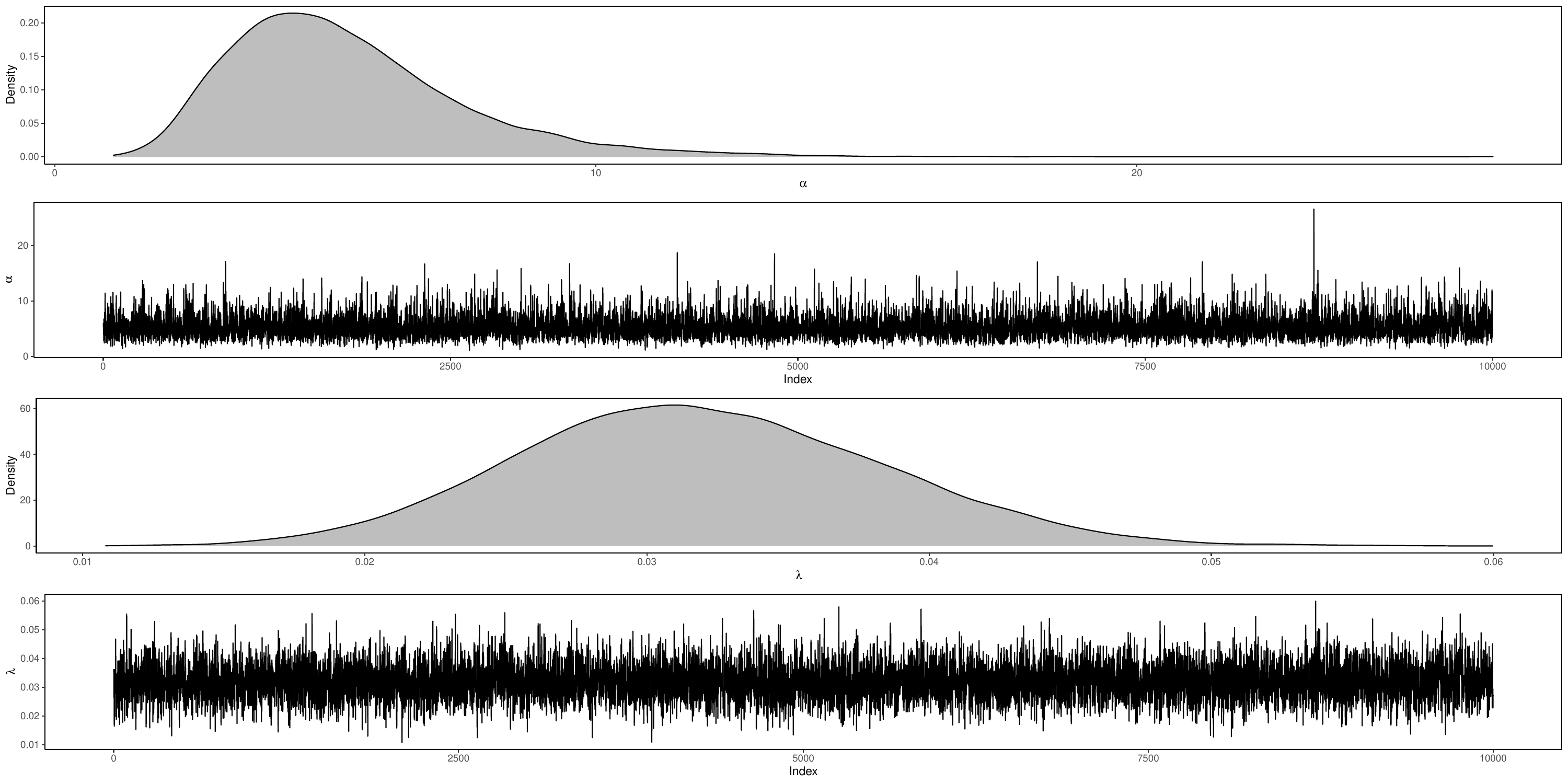}
\end{figure}

\begin{figure}[ht!]
\centering
\caption{ACF plots of the marginal posterior samples}
\label{fig:GE_Bayes_app_acf}
\includegraphics[width=1\linewidth]{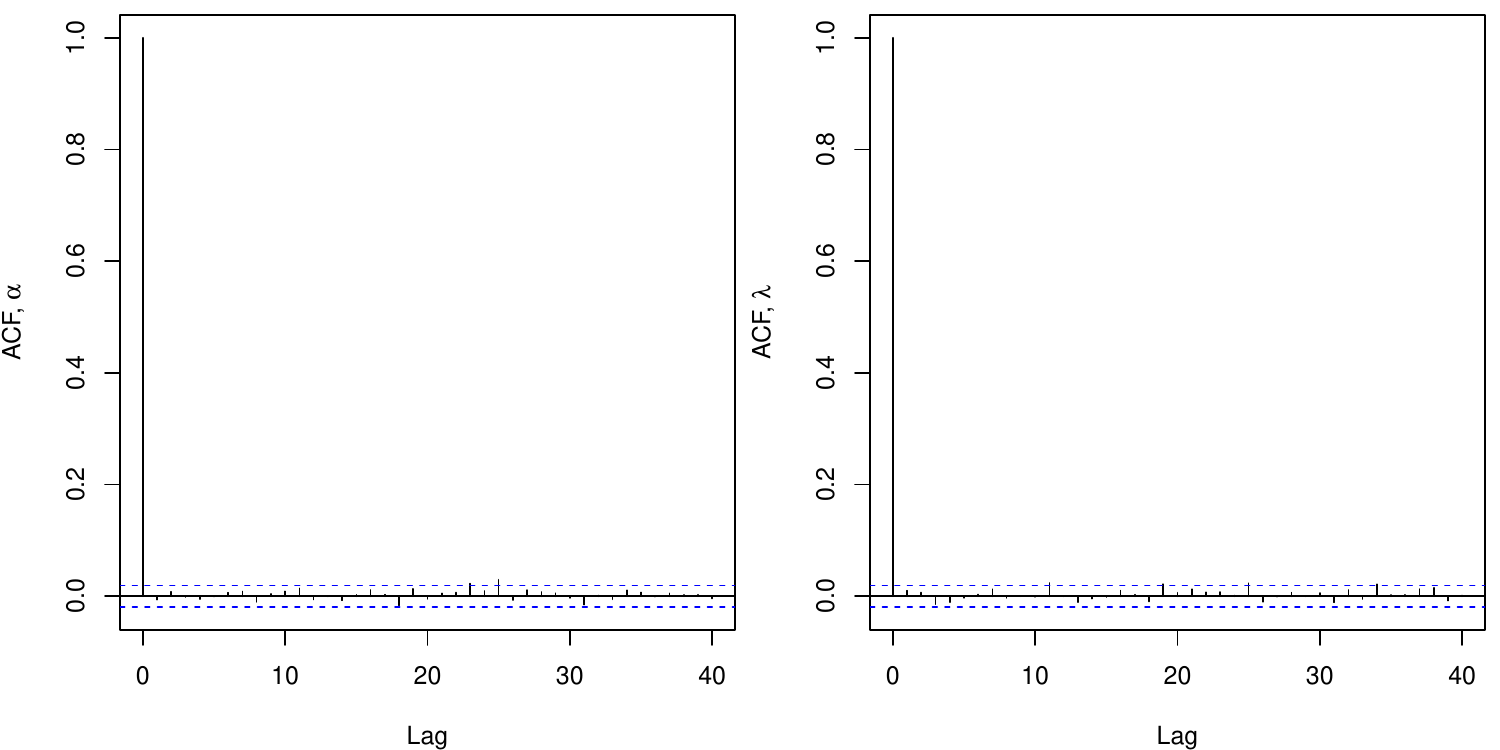}
\end{figure}

Table \ref{tab:Bayes_est} shows the Bayes and MLE estimates of the parameters $\alpha$ and $\lambda$. Kolmogorov-Smirnov (K-S) statistics and p-value are calculated to compare the empirical distribution and the fitted GE distribution with the estimated parameters. Furthermore, a histogram of the data with the fitted densities (Figure \ref{fig:GE_Bayes_app_fit_zoom}) are included to visualize the fitness of the models. Based on the results, the GE distribution estimated using the independence Jeffreys prior is a good fit for the empirical distribution.

\begin{table}[!htbp] \label{tab:Bayes_est}
%\small\addtolength{\tabcolsep}{-0.3pt}
\centering % centering table
\caption{Estimation of the GE parameters for the data from \citet{lawless_statistical_1982}.}
\begin{tabular}{rrrrr}    \hline
Method                  & $\hat{\alpha}$ & $\hat{\lambda}$ &  K-S     & p-value \\ \hline
Bayes                   & 5.0219         & 0.0317          & 0.10432  & 0.9416   \\
MLE                     & 5.2783         & 0.0322          & 0.10588  & 0.9349      \\
\hline
\end{tabular}
\end{table}

\begin{figure}[htbp!]
\centering
\caption{Histogram with fitted densities using the Bayes (solid line) and MLE (dashed line) estimators}
\label{fig:GE_Bayes_app_fit_zoom}
\includegraphics[width=1\linewidth]{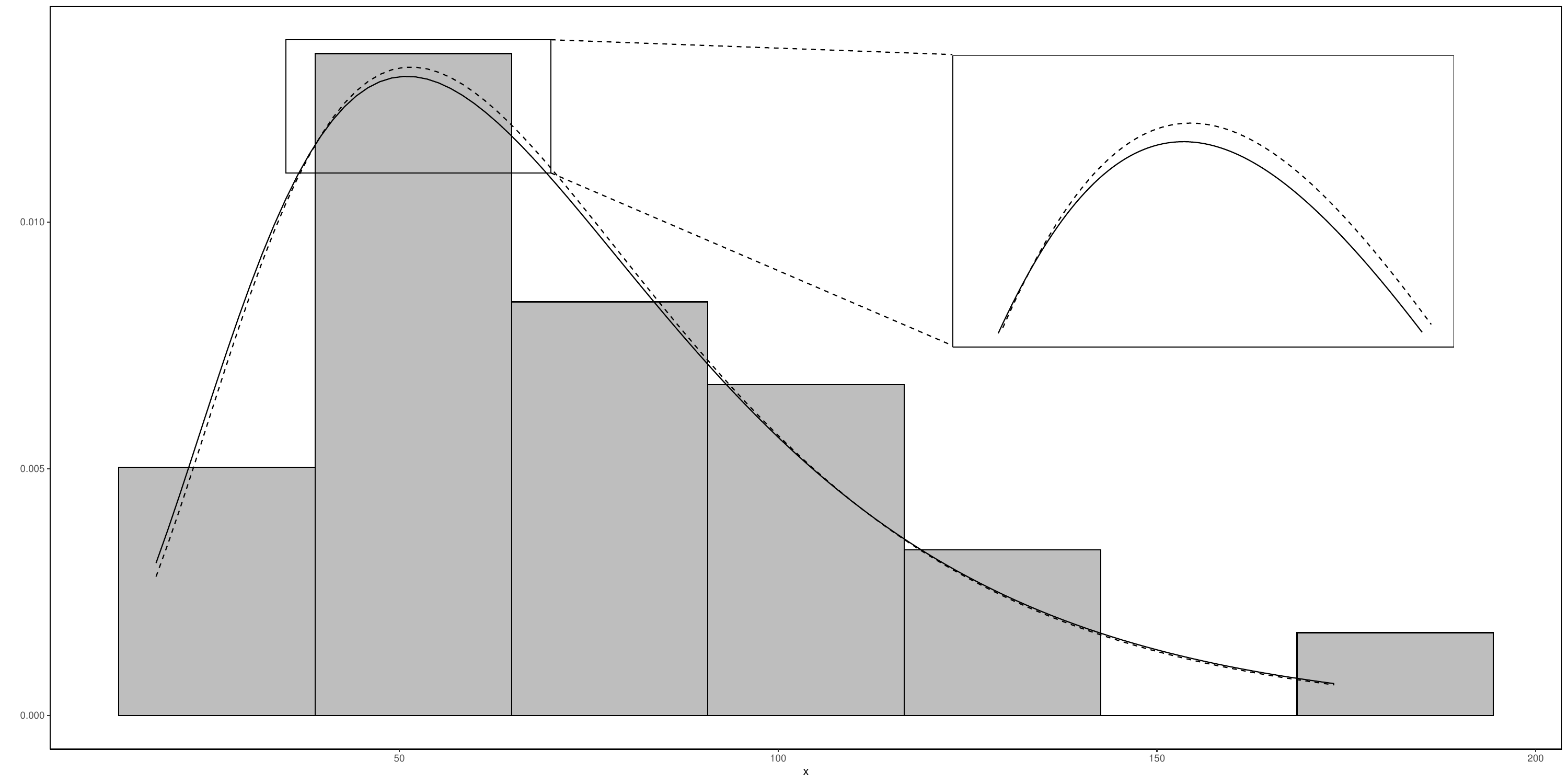}
\end{figure}

\section{Concluding remarks}
\label{sec:con}
In this paper, we have studied Bayesian inference for the GE distribution under a family of vague priors in (\ref{eq:pp}) and validated the propriety for the posterior distribution of the parameters  $a \ge 1$ and $b \le 1$. We have also proposed a generalized ratio-of-uniforms method for generating posterior samples for making statistical inference. It is worth mentioning that the proposed algorithm not only generates independent samples from the posterior distribution, but also is independent of starting values. Numerical results from both simulations and a real-data application showed that the Bayesian approach outperformed the MLE in terms of SBias and SRMSE, especially when the sample size is small or moderate. In summary, we have a preference for the Bayes estimators as an alternative to the MLE for the parameters of the GE distribution in practical applications. %Finally, a real data analysis further confirmed the suprior performance of the Bayesian method in terms of the KS tests. %that it is feasible to use objective Bayesian analysis for the GE distribution in practice.

%For further research, one can try to validate the propriety of other non-informative priors, including probability matching priors, the maximal data information prior, and reference priors.

\section{Appendix}

\setcounter{section}{1}
\setcounter{equation}{0}
\setcounter{figure}{0}
\setcounter{table}{0}
\def\theequation{A\arabic{equation}}
\def\thesection{A\arabic{section}}

\renewcommand{\thefigure}{A\arabic{figure}}
\renewcommand{\thetable}{A\arabic{table}}

\textbf{Proof of Theorem \ref{theorem_proper}}: We first introduce the following two important lemmas that play key roles in  validating the propriety of Equation \eqref{eq:mpost_lambda}.
\begin{lemma}
\label{lemma_bound}
If $x>0$, then $x-\frac{1}{2}x^2<1-e^{-x}<x$.
\end{lemma}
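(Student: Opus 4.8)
The plan is to prove the two inequalities separately, each by reducing it to a monotonicity statement for an auxiliary function that vanishes at the origin.

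For the right-hand inequality $1-e^{-x}<x$, I would introduce $g(x)=x-1+e^{-x}$. Then $g(0)=0$ and $g'(x)=1-e^{-x}$, which is strictly positive for every $x>0$ because $e^{-x}<1$ there. Hence $g$ is strictly increasing on $(0,\infty)$, so $g(x)>g(0)=0$, which is exactly $1-e^{-x}<x$.

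For the left-hand inequality $x-\tfrac12 x^2<1-e^{-x}$, I would introduce $h(x)=1-e^{-x}-x+\tfrac12 x^2$, so that $h(0)=0$. Here a single derivative does not suffice, since $h'(x)=e^{-x}-1+x$ also vanishes at $x=0$; I would therefore differentiate once more to obtain $h''(x)=1-e^{-x}$, which is strictly positive on $(0,\infty)$ by the step above. Thus $h'$ is strictly increasing with $h'(x)>h'(0)=0$ for $x>0$, which makes $h$ itself strictly increasing with $h(x)>h(0)=0$ for all $x>0$; this is the desired bound.

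I do not anticipate any real obstacle here: the argument is elementary calculus. The only mild subtlety is that the lower bound is genuinely a second-order estimate and needs the extra derivative (equivalently, one may invoke Taylor's theorem with Lagrange remainder, $1-e^{-x}=x-\tfrac12 x^2+\tfrac16 e^{-\xi}x^3$ for some $\xi\in(0,x)$, and discard the positive cubic term), whereas the upper bound is immediate. Both inequalities are strict throughout $(0,\infty)$ and degenerate to equalities only in the limit $x\to 0^{+}$.
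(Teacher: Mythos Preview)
Your argument is correct and is essentially the paper's own proof: the paper defines a single auxiliary function $g(x)=1-e^{-x}-x+\tfrac12 x^2$ (your $h$) and observes that $g'(x)=e^{-x}-1+x$ (your $g$, up to sign) already encodes the upper inequality, so the second-derivative step simultaneously yields both bounds. Your version merely names the two pieces separately, but the content is identical.
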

\begin{proof}
Let $g(x) = 1 - e^{-x}-x +\frac{1}{2}x^2$. It can be easily shown that \begin{eqnarray}\label{lemma_bound_der}
g'(x) = e^{-x}-1+x,\text{ and }g''(x) = 1- e^{-x}.\end{eqnarray}
From the second equation in (\ref{lemma_bound_der}), we see that $g''(x)>0$ for $x>0$. Hence $g'(x)$ is an increasing function in $x$, and $e^{-x}-1+x = g'(x) > g'(0) = 0$ for $x>0$. On the other hand, $g'(x)>0$ for $x>0$ implies that $g(x)> g(0) = 0$ for $x>0$. We thus proved the lemma.
\end{proof}

\begin{lemma}
\label{lemma_infty}
For a positive integer $n\ge 2$, suppose $a \ge 1$, $n-a+1>0$, $x_i>0$ for $i=1,\dots n$, and not all $x_i$'s are equal. There exists at least one $i$, such that $$\frac{\sum_{j=1}^{n}x_j - \delta}{n-a+1} - x_i > 0,$$ where $\delta > 0$ is a constant only depends on $\bm X = (x_1,\dots,x_n)'$.
\end{lemma}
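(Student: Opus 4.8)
The plan is to reduce the claim to a single scalar inequality about the smallest of the $x_j$'s. Write $S = \sum_{j=1}^{n} x_j$ and let $x_{(1)} = \min_{1\le j\le n} x_j$, with $i_0$ an index attaining this minimum. For the choice $i = i_0$, the desired conclusion, after multiplying through by $n-a+1>0$, is equivalent to
\[
\delta < S - (n-a+1)\,x_{(1)} .
\]
Hence it suffices to show that the right-hand side is strictly positive; once that is known, any $\delta$ in the open interval $\bigl(0,\,S-(n-a+1)x_{(1)}\bigr)$ works — for concreteness $\delta = \tfrac12\bigl(S-(n-a+1)x_{(1)}\bigr)$ — and this $\delta$ depends only on $\bm X$ (together with the fixed constants $n$ and $a$, and indeed $n$ is itself determined by $\bm X$).

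To establish $S > (n-a+1)\,x_{(1)}$ I would chain two elementary bounds. First, since $x_j \ge x_{(1)}$ for every $j$ and, by hypothesis, not all the $x_j$ are equal, at least one term strictly exceeds $x_{(1)}$, so $S = \sum_{j=1}^{n} x_j > n\,x_{(1)}$. Second, because $a \ge 1$ we have $n-a+1 \le n$, and since $x_{(1)} > 0$ this yields $n\,x_{(1)} \ge (n-a+1)\,x_{(1)}$. Combining the two inequalities gives $S > (n-a+1)\,x_{(1)}$, as required.

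It then remains only to verify the conclusion directly for this $\delta$ and $i = i_0$:
\[
\frac{S-\delta}{\,n-a+1\,} - x_{(1)}
 = \frac{\bigl(S-(n-a+1)x_{(1)}\bigr) - \delta}{n-a+1} > 0 ,
\]
where positivity uses $n-a+1>0$ together with $\delta < S-(n-a+1)x_{(1)}$.

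There is no substantial obstacle here; the only point that needs care is that \emph{both} hypotheses are genuinely used: the assumption that not all $x_i$ are equal is exactly what makes the first bound strict (and hence permits a strictly positive $\delta$), while $a\ge 1$ is what allows replacing $n$ by $n-a+1$ in the coefficient of $x_{(1)}$, and the condition $n-a+1>0$ is needed only so that dividing by it preserves the inequality. I would also remark that ties for the minimum cause no difficulty, since any index attaining $\min_j x_j$ may be taken as the required $i$.
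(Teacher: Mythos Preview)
Your proof is correct and follows essentially the same approach as the paper's: both identify an index $i$ with $x_i$ small relative to $S/(n-a+1)$, use $a\ge 1$ to compare $n-a+1$ with $n$, and then take $\delta$ inside the resulting positive gap. The only cosmetic difference is that you explicitly pick $i$ as an index attaining the minimum, whereas the paper takes any $i$ with $x_i<\bar{x}$ and writes the gap in terms of an auxiliary $\eta$; your version is slightly more concrete but the substance is identical.
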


\begin{proof}
Since not all $x_i$'s are equal, we know there exists at least one $i$ such that $\bar{x}-x_i>0$. Now $$\frac{\sum_{j=1}^{n}x_j}{n-a+1} =\frac{n}{n-a+1} \bar{x} >\bar{x},\text{ due to }a\ge 1.$$ Hence, there exists an $\eta>0$ such that $$\frac{\sum_{j=1}^{n}x_j}{n-a+1} - x_i>\eta.$$ Let $\delta = \eta(n-a+1)$. The result is immediate.
\end{proof}

We now provide the proof of Theorem \ref{theorem_proper} as follows.
\begin{proof}
To prove the propriety of the marginal posterior distribution of $\lambda$ in \eqref{eq:mpost_lambda}, we only need to prove that the density is integrable both when either $\lambda\rightarrow 0$ or $\lambda\rightarrow\infty$.\\
First let's consider the case when $\lambda\rightarrow\infty$. Equation \eqref{eq:mpost_lambda} can be written as
\begin{align}
    \pi(\lambda \mid \bm X)
    =  \lambda^{n-b} e^{-\lambda \delta} \prod_{i=1}^{n}\left(1-e^{-\lambda x_i}\right)^{-1} \Gamma(n-a+1) \left\{\frac{e^{-\lambda \frac{\sum_{j=1}^{n}x_j - \delta}{n-a+1}} }{-\sum_{i=1}^{n}\ln\left(1-e^{-\lambda x_i}\right)}\right\}^{n-a+1}. \label{eq:mpost_lambda2}
\end{align}
The first two terms in \eqref{eq:mpost_lambda2} is a gamma density kernel and hence they are integrable at $\lambda\rightarrow\infty$. The third and fourth terms are bounded near infinity. Hence we only need to check the ratio inside the brackets of the last term.  It can be easily seen that both the numerator and denominator go to zeroes when $\lambda\rightarrow\infty$. Hence we use L'Hospital's rule to derive the following
\begin{align*}
\underset{\lambda\rightarrow\infty}{\lim} &\left\{\frac{e^{-\lambda \frac{\sum_{j=1}^{n}x_j - \delta}{n-a+1}}}{-\sum_{i=1}^{n}\ln\left(1-e^{-\lambda x_i}\right)}\right\} = \underset{\lambda\rightarrow\infty}{\lim} \left\{\frac{- \frac{\sum_{j=1}^{n}x_j - \delta}{n-a+1}e^{-\lambda \frac{\sum_{j=1}^{n}x_j - \delta}{n-a+1}}}{-\sum\limits_{i=1}^n\frac{x_i e^{-\lambda x_i}}{1- e^{-\lambda x_i}} }\right\} \\
&= \frac{\sum_{j=1}^{n}x_j - \delta}{n-a+1} \underset{\lambda\rightarrow\infty}{\lim} \left\{\sum\limits_{i=1}^n \frac{x_i}{1-e^{-\lambda x_i}}e^{\lambda \left(\frac{\sum_{j=1}^{n}x_j - \delta}{n-a+1} -x_i \right)} \right\}^{-1}.
\end{align*}
Using Lemma \ref{lemma_infty} we know that there exists at least one $i$ such that $\frac{\sum_{j=1}^{n}x_j - \delta}{n-a+1} -x_i>0$ for the exponential terms in the last line of the above equation, which implies that the limit tends to 0 as $\lambda\rightarrow \infty$.\\
To show that the marginal posterior distribution of $\lambda$ is integrable near 0, we consider the following. Using Lemma \ref{lemma_bound}, we know $x-\frac{1}{2}x^2<1-e^{-x}<x$ for any $x>0$. Therefore, it follows that
$$
-\ln\left(1- e^{-\lambda x_i} \right) > \ln(\lambda x_i),\text{ and }1 - e^{-\lambda x_i} > \lambda\left(x_i - \frac{\lambda x_i^2}{2}\right).
$$
Now applying these inequalities into \eqref{eq:mpost_lambda}, we have
\begin{align}\label{Upper_bounds}
\pi(\lambda \mid \bm X) &< \lambda^{-b}e^{- \lambda\sum\limits_{i=1}^ n x_i}\prod\limits_{i=1}^n\left(x_i - \frac{\lambda x_i^2}{2}\right)^{-1} \frac{ \Gamma(n-a+1)}{\left\{-n\ln(\lambda)- \sum\limits_{i=1}^n\ln(x_i) \right\}^{n-a+1}}.
\end{align}
All terms, except $\lambda^{-b}$, in (\ref{Upper_bounds}) are bounded when $\lambda\rightarrow 0$. Note that the denominator of the last ratio approaches infinity. Hence, when $b<1$, the right-hand side of (\ref{Upper_bounds}) is integrable at $\lambda$ near 0. When $b=1$, we consider the first term and the denominator term in (\ref{Upper_bounds}). Since $$\int\lambda^{-1}\frac{1}{\left\{-n\ln(\lambda)- \sum\limits_{i=1}^n\ln(x_i) \right\}^{n-a+1} }d\lambda = \frac{1}{n(n-a)}\left\{-n\ln(\lambda)- \sum\limits_{i=1}^n\ln(x_i) \right\}^{-(n-a)} + C.$$ When $n> a$, the above integral is integrable at $\lambda\rightarrow 0$.
\end{proof}

    \bibliographystyle{ajs}

\begin{thebibliography}{}
\newcommand{\enquote}[1]{``#1''}

\bibitem[\protect\citeauthoryear{Achcar, Moala, and Boleta}{Achcar
  et~al.}{2015}]{achcar_generalized_2015}
Achcar, J., F.~Moala, and J.~Boleta. 2015.
\newblock \enquote{Generalized exponential distribution: {A} {Bayesian}
  approach using {MCMC} methods.}
\newblock {\em International Journal of Industrial Engineering Computations\/}
  6:1--14.

\bibitem[\protect\citeauthoryear{Gupta and Kundu}{Gupta and
  Kundu}{1999}]{gupta_theory_1999}
Gupta, Rameshwar~D. and Debasis Kundu. 1999.
\newblock \enquote{Theory \& Methods: {G}eneralized exponential distributions.}
\newblock {\em Australian \& New Zealand Journal of Statistics\/} 41:173--188.

\bibitem[\protect\citeauthoryear{Gupta and Kundu}{Gupta and
  Kundu}{2002}]{gupta_generalized_2002}
Gupta, Rameshwar~D. and Debasis Kundu. 2002.
\newblock \enquote{Generalized exponential distributions: {Statistical}
  inferences.}
\newblock {\em Journal of Statistical Theory and Applications\/} 1:101--118.

\bibitem[\protect\citeauthoryear{Gupta and Kundu}{Gupta and
  Kundu}{2007}]{gupta_generalized_2007}
Gupta, Rameshwar~D. and Debasis Kundu. 2007.
\newblock \enquote{Generalized exponential distribution: {Existing} results and
  some recent developments.}
\newblock {\em Journal of Statistical Planning and Inference\/} 137:3537--3547.

\bibitem[\protect\citeauthoryear{Jaheen}{Jaheen}{2004}]{jaheen_empirical_2004}
Jaheen, Zeinhum~F. 2004.
\newblock \enquote{Empirical {Bayes} inference for generalized exponential
  distribution based on records.}
\newblock {\em Communications in Statistics - Theory and Methods\/}
  33:1851--1861.

\bibitem[\protect\citeauthoryear{Jeffreys}{Jeffreys}{1961}]{jeffreys_theory_1961}
Jeffreys, Harold. 1961.
\newblock {\em Theory of {P}robability\/}.
\newblock Clarendon Press, Oxford, third edition.

\bibitem[\protect\citeauthoryear{Kundu and Gupta}{Kundu and
  Gupta}{2008}]{kundu_generalized_2008}
Kundu, Debasis and Rameshwar~D. Gupta. 2008.
\newblock \enquote{Generalized exponential distribution: {Bayesian}
  estimations.}
\newblock {\em Computational Statistics \& Data Analysis\/} 52:1873--1883.

\bibitem[\protect\citeauthoryear{Lawless}{Lawless}{1982}]{lawless_statistical_1982}
Lawless, Jerald~F. 1982.
\newblock {\em Statistical {M}odels and {M}ethods for {L}ifetime {D}ata\/}.
\newblock Wiley series in probability and mathematical statistics. {Applied}
  probability and statistics. New York: Wiley.

\bibitem[\protect\citeauthoryear{Moala, Achcar, and Tomazella}{Moala
  et~al.}{2012}]{moala_bayesian_2012}
Moala, F., J.~Achcar, and V.~Tomazella. 2012.
\newblock \enquote{Bayesian estimation of generalized exponential distribution
  under noninformative priors.}
\newblock {\em AIP Conference Proceedings\/} 1490:230--242.
\newblock Publisher: American Institute of Physics.

\bibitem[\protect\citeauthoryear{Nadarajah}{Nadarajah}{2011}]{nadarajah_exponentiated_2011}
Nadarajah, Saralees. 2011.
\newblock \enquote{The exponentiated exponential distribution: a survey.}
\newblock {\em AStA Advances in Statistical Analysis\/} 95:219--251.

\bibitem[\protect\citeauthoryear{Naqash, Ahmad, and Ahmed}{Naqash
  et~al.}{2016}]{naqash_bayesian_2016}
Naqash, Saima, S.~P. Ahmad, and Aquil Ahmed. 2016.
\newblock \enquote{Bayesian analysis of generalized exponential distribution.}
\newblock {\em Journal of Modern Applied Statistical Methods\/} 15:656--670.

\bibitem[\protect\citeauthoryear{Raqab and Madi}{Raqab and
  Madi}{2005}]{raqab_bayesian_2005}
Raqab, Mohamed and Mohamed Madi. 2005.
\newblock \enquote{Bayesian inference for the generalized exponential
  distribution.}
\newblock {\em Journal of Statistical Computation and Simulation\/}
  75:841--852.

\bibitem[\protect\citeauthoryear{Wakefield, Gelfand, and Smith}{Wakefield
  et~al.}{1991}]{wakefield_efficient_1991}
Wakefield, J.~C., A.~E. Gelfand, and A.~F.~M. Smith. 1991.
\newblock \enquote{Efficient generation of random variates via the
  ratio-of-uniforms method.}
\newblock {\em Statistics and Computing\/} 1:129--133.

\end{thebibliography}

\end{document}